\documentclass[conference]{IEEEtran}
\IEEEoverridecommandlockouts

\usepackage{cite}
\usepackage{amsmath,amssymb,amsfonts}
 \usepackage{algorithm}
 \usepackage{array}
 \usepackage{ctable}
 \usepackage{makecell}
\usepackage{graphicx}
\usepackage{textcomp}
\usepackage{xcolor}
\usepackage{subfig}  
\usepackage{caption}
\captionsetup{font=footnotesize}
 \usepackage{algpseudocode}

\ifCLASSINFOpdf
\else
\fi
\usepackage{graphicx}
\usepackage{setspace}
\usepackage{verbatim}
\hyphenation{op-tical net-works semi-conduc-tor}
\usepackage{multicol}
\usepackage{amsmath}
\usepackage{mathrsfs}
\usepackage{amssymb, amsthm}
\usepackage{bm}
\usepackage{cite}
\usepackage{enumerate}
\usepackage{diagbox}
\usepackage{xcolor}
\usepackage{setspace}
\usepackage{multirow}

\newcommand{\cM}{\mathcal{M}}

\newcommand{\cN}{\mathcal{N}}

\newtheorem{myth}{Theorem}

\newtheorem{myle}[myth]{\bf Lemma}

\def\BibTeX{{\rm B\kern-.05em{\sc i\kern-.025em b}\kern-.08em
    T\kern-.1667em\lower.7ex\hbox{E}\kern-.125emX}}
\usepackage{graphicx}
\setlength{\textfloatsep}{10pt plus 1.0pt minus 2.0pt}
\setlength{\floatsep}{2pt plus 0.5pt minus 0.5pt}   
\setlength{\textfloatsep}{3pt plus 0.5pt minus 0.5pt}
\setlength{\intextsep}{2pt plus 0.5pt minus 0.5pt}   

\setlength{\parindent}{1em}     

\usepackage{enumitem}
\setlist{nosep, leftmargin=*}
\allowdisplaybreaks[1]          
\setlength{\abovedisplayskip}{3pt}
\setlength{\belowdisplayskip}{3pt}

\usepackage{booktabs}
\begin{document}
\addtolength{\topmargin}{+0.5cm}
\title{
Multi-Waveguide Pinching Antenna Placement Optimization for Rate Maximization
}

\author{
\IEEEauthorblockN{Yue Zhang\IEEEauthorrefmark{1}, Yaru Fu\IEEEauthorrefmark{2}, Pei Liu\IEEEauthorrefmark{3}, Yalin Liu\IEEEauthorrefmark{2}, Kevin Hung\IEEEauthorrefmark{2}}
    \IEEEauthorblockA{\IEEEauthorrefmark{1}Department of Electronic Engineering, Shantou University, Shantou 515063, China
    }
    \IEEEauthorblockA{\IEEEauthorrefmark{2}School of Science and Technology, Hong Kong Metropolitan University, Hong Kong 999077, China
    }
    \IEEEauthorblockA{\IEEEauthorrefmark{3}School of Information Engineering, Wuhan University of Technology, Wuhan 430070, China\\
    E-mail: yuezhang@stu.edu.cn,  \{yfu, ylliu, khung\}@hkmu.edu.hk, pei.liu@ieee.org
    }
}

\maketitle

\begin{abstract}
Pinching antenna systems (PASS) have emerged as a technology that enables the large-scale movement of antenna elements, offering significant potential for performance gains in next-generation wireless networks. 
This paper investigates the problem of maximizing the average per-user data rate by optimizing the antenna placement of a multi-waveguide PASS, subject to a stringent physical minimum spacing constraint.
To address this complex challenge, which involves a coupled fractional objective and a non-convex constraint, we employ the fractional programming (FP) framework to transform the non-convex rate maximization problem into a more tractable one, and devise a projected gradient ascent (PGA)-based algorithm to iteratively solve the transformed problem.
Simulation results demonstrate that our proposed scheme significantly outperforms various geometric placement baselines, achieving superior per-user data rates by actively mitigating multi-user interference.
\end{abstract}

\begin{IEEEkeywords}
Pinching antenna systems, antenna placement optimization, fractional programming.
\end{IEEEkeywords}

\section{Introduction}
The rapid evolution towards Beyond 5G (B5G) and 6G networks necessitates innovative hardware solutions capable of delivering ultra-high data rates, massive connectivity, and exceptional energy efficiency. 
In this landscape, pinching antenna systems (PASS) have emerged as a promising solution \cite{DingFlexible,LiuPinching}.
In contrast to other flexible antenna systems, such as the fluid antenna system \cite{WongFluid} and movable antenna system \cite{ZhuModeling}, PASS enables larger-scale movement of antenna elements.
Specifically, in PASS, each waveguide transmits signals through multiple dielectric clamps, which can be flexibly deployed based on the real-time user positions, and can hence effectively mitigate path loss and establish strong line-of-sight (LoS) links. 

Given the significant performance gains offered by dynamic placement, optimizing the position of pinching antennas has recently become a key research focus.
In \cite{XuRate}, a two-stage optimization algorithm was proposed to minimize the path loss and maximize the received signal strength of a single-waveguide PASS.
In \cite{QinJoint}, a reinforcement learning method was designed to improve the rate and sensing performance of a pinching antenna-assisted integrated sensing and communication system.
Later, closed-form expressions of the optimal positions of pinching antennas were derived in \cite{XieA} and \cite{DingAna} with the assumption that there is only one active pinching antenna in the system.
The study was further extended to multi-waveguide scenarios in \cite{ZhouA,FuPower,ChenDynamic,WangModeling}, where probability learning method \cite{ChenDynamic}, meta-learning approach \cite{ZhouA}, iterative algorithm \cite{FuPower}, and element-wise searching algorithm \cite{WangModeling} were respectively applied to solve the rate maximization or power minimization problems.

Although many efforts have been devoted to antenna placement optimization of PASS, a constraint that two neighbouring pinching antennas at the same waveguide should be separated with a minimum distance has not been well addressed.
This separation is primarily required to prevent mutual coupling between the antennas. 
Simultaneously, the pinching antenna's coupling length is critical because it fundamentally determines the antenna’s radiated power \cite{WangModeling}.
Therefore, in a practical system, a minimum spacing, at least larger than the coupling length of each antenna, must be enforced between two neighbouring antennas.
The authors of \cite{ChenDynamic} and \cite{WangModeling} took this constraint into account.
However, they addressed this issue by discretizing the waveguide into a finite number of candidate positions for pinching antennas to enforce minimum spacing.
While this method ensures the feasibility of the placement scheme, it sacrifices the flexibility of finding the truly optimal continuous antenna locations.
How to cope with the highly non-convex feasible set for the antenna positions created by the minimum antenna spacing constraint without using discretization still remains largely unknown.

Building upon the demonstrated potential of multi-waveguide PASS for enhanced capacity and interference mitigation, in this paper, we consider the antenna placement problem in a multi-waveguide PASS, where each waveguide is equipped with multiple antennas to simultaneously serve multiple users.
We explicitly consider the minimum antenna spacing constraint and design an efficient algorithm to address this issue, focusing on optimizing continuous antenna locations to avoid the performance loss associated with discretization methods used in prior works.
Our main contributions are summarized as follows.
\begin{itemize}
    \item We establish a system model for a downlink multi-waveguide PASS, explicitly relate the per-user data rate to the active pinching antennas' positions, and rigorously formulate the problem as a highly non-convex optimization problem subject to waveguide length and the critical minimum spacing constraint.
    \item We utilize the fractional programming (FP) framework to transform the non-convex antenna placement problem into a more tractable one, and design a projected gradient ascent (PGA) method to search for the optimal solution.
    \item We devise a novel projection algorithm that guarantees feasibility and prove that it achieves the minimum-distance projection property.
\end{itemize}


\section{System Model and Problem Formulation}
\begin{figure}
\centering
\includegraphics[width=3in]{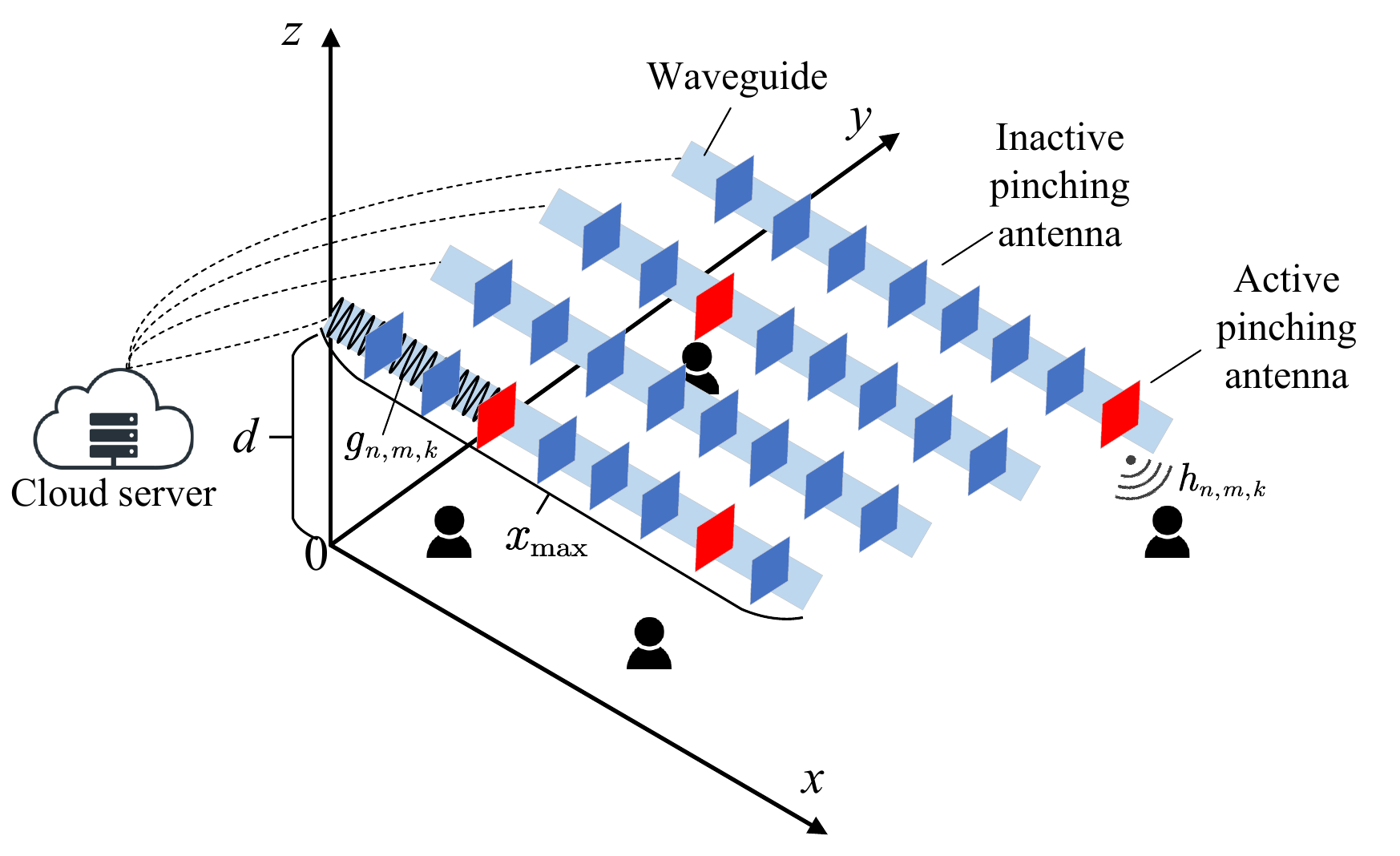}
\footnotesize
\captionsetup{font=footnotesize}
\caption{Graphic illustration of the pinching antenna system.}\label{PASS}
\end{figure}
We consider a downlink pinching antenna system consisting of a set of waveguides $\mathcal{N}$ serving for a set of single-antenna users $\mathcal{K}$, with $|\mathcal{N}|=N$ and $|\mathcal{K}|=K$.
Each waveguide is equipped with a set of pinching antennas, denoted by $\mathcal{M}_n$ for waveguide $n$ with $|\mathcal{M}_n| = M$ for all $n\in\mathcal{N}$.
We assume that all the waveguides are deployed in parallel to the $x$-axis in a Cartesian coordinate and at the same height $d$, as shown in Fig. \ref{PASS}, while all users are within the $xy$-plane with zero altitude.
Accordingly, we denote the position of user $k$ as $\bm{\psi}_k=(x_k,y_k,0)$ and the position of the $m$-th pinching antenna in $\mathcal{M}_n$ as $\bm{\psi}_{n,m}^{\mathrm{pin}}=(x_{n,m}^{\mathrm{pin}},y_n^{\mathrm{pin}},d)$.
All waveguides are equally spaced, and for waveguide $n$, $y_n^{\mathrm{pin}}=\frac{(n-1)y_{\max}}{N-1}$, where $y_{\max}$ is the $y$-coordinate of the farthest waveguide to the $x$-axis.

A cloud server is in charge of the baseband signal processing and feeds the superposition signal to each waveguide.
We assume that each user is only served by its closest waveguide, and let ${\cal{K}}_n$ denote the set of users served by waveguide $n$.
Therefore, the feeding signal of waveguide $n$ is $\sum_{k\in{\cal{K}}_n}s_{k}$, where $s_k$ denotes the signal transmitted to user $k$.
We assume a one-to-one mapping between the users served by waveguide $n$ and the active pinching antennas, such that the number of active antennas equals the number of users, i.e., $|\tilde{\mathcal{M}}_{n}|=|{\mathcal{K}}_{n}|$, where $\tilde{\mathcal{M}}_{n}\in \mathcal{M}_{n}$ denotes the set of the active pinching antennas of waveguide $n$. 

Compared to the path loss, the signal attenuation inside each waveguide is negligible.
Therefore, signals only undergo phase shifts, which are determined by the positions of pinching antennas, when traveling inside each waveguide.
Specifically, we denote the in-waveguide channel gain of the $m$-th pinching antenna at the $n$-th waveguide $g_{k,n,m}$ as
\begin{align}\label{WG_channel}
    g_{n,m,k} = e^{-j\frac{2\pi n_{\mathrm{eff}}}{\lambda}\|\bm{\psi}_{n,m}^{\mathrm{pin}}-\bm{\psi}_{n,0}^{\mathrm{pin}}\|},
\end{align}
where $\bm{\psi}_{n,0}^{\mathrm{pin}}=(0,y_n^{\mathrm{pin}},d)$ is the position of the feed point of waveguide $n$.
$\lambda$ is the wavelength of the transmit signal.
$n_{\mathrm{eff}}$ is the effective refractive index of a dielectric waveguide \cite{PozarMicro}.
The free-space channel gain between user $k$ and the $m$-th pinching antenna at waveguide $n$ is given by \cite{ZhangBeam}
\begin{align}\label{FS_channel}
    h_{n,m,k} = \frac{\lambda e^{-j\frac{2\pi}{\lambda}\|\bm{\psi}_k-\bm{\psi}_{n,m}^{\mathrm{pin}}\|}}{4\pi\|\bm{\psi}_k-\bm{\psi}_{n,m}^{\mathrm{pin}}\|}.
\end{align}
Thereby, the received signal of user $k$ can be expressed below:
\begin{align}
    y_k = &\underbrace{\sum_{m\in \tilde{\mathcal{M}}_{n_k}} h_{m,n_k,k}g_{m,n_k,k}P^{1/2}s_k}_{\mathrm{desired}\ \mathrm{signal}}\nonumber\\
    &+\underbrace{\sum_{i\in \mathcal{K}\backslash\{k\}}\sum_{m\in \tilde{\mathcal{M}}_{n_i}} h_{m,n_i,i}g_{m,n_i,i}P^{1/2}s_i}_{\mathrm{multiuser}\ \mathrm{interference}}+w_k,
\end{align}
where 
$P$ is the transmit power for each user.
$n_k$ is the index of the waveguide that serves user $k$.
$w_k$ is the additive white Gaussian noise received at user $k$ with zero mean and variance $\sigma^2$.
Correspondingly, the instantaneous achievable data rate of user $k$ can be obtained as\footnote{Without loss of generality, we normalize the bandwidth to one and focus on the spectral efficiency.}
\begin{align}\label{Rate}
    R_k = \log_2(1+\rho_k),
\end{align}
where
\begin{align}\label{SINR}
    \rho_k = \frac{P|\sum_{m\in \tilde{\mathcal{M}}_{n_k}} h_{n_k,m,k}g_{n_k,m,k}|^2}{P\sum_{i\in \mathcal{K}\backslash\{k\}}|\sum_{m\in \tilde{\mathcal{M}}_{n_i}} h_{n_i,m,k}g_{n_i,m,k}|^2+\sigma^2}
\end{align}
is the signal-to-interference-plus-noise ratio (SINR) of user $k$.
We also define the per-user data rate as
\begin{align}\label{ave_R}
    R=\frac{1}{K}\sum_{k=1}^K R_k.
\end{align}

From \eqref{FS_channel}--\eqref{ave_R}, we can see that the per-user data rate is determined by the positions of the activated pinching antennas.
In this paper, our goal is to maximize the per-user data rate by optimizing the positions of the activated pinching antennas based on the distribution of users.
The optimization problem can be formulated as
\begin{align}
    \text{P1: }&\mathop {\max} \limits_{\mathbf{x}^{\mathrm{pin}}}\,\,R\\
    \text{s.t.}\nonumber~
	&\text{C1}:~0\le x_{n,m}^{\mathrm{pin}}\le x_{\max},\forall n\in \cN, m\in \tilde{\cM}_n,\\
    &\text{C2}:~\| x_{n,m}^{\mathrm{pin}}- x_{n,m'}^{\mathrm{pin}}\|\ge d_{\min},\nonumber~\\ 
    &\ \ \ \ \ \ \ \forall n\in \cN, m,m'\in \tilde{\cM}_n\ \mathrm{and}\ m\ne m',\nonumber
\end{align}
where $x_{\max}$ is the length of one waveguide.
$\mathbf{x}^{\mathrm{pin}}$ denotes the collection of the $x$-coordinates of all the activated pinching antennas.
C2 implies that the distance between two neighboring pinching antennas should not be smaller than $d_{\min}$ to avoid the coupling effect.
We assume that a waveguide is long enough to satisfy that $x_{\max}\ge (K-1)d_{\min}$, and hence there always exists a feasible solution to P1.
 
The optimization problem P1 remains highly non-convex due to the complex, coupled nature of the objective function, as seen in \eqref{WG_channel}--\eqref{ave_R}, and the intricate non-convex feasible set $\cal X$ defined by constraints C1 and C2. 
This inherent non-convexity is a significant challenge, as it prevents the direct application of standard gradient-based optimization methods, which are only guaranteed to converge to the global optimum for convex problems, and would instead likely become trapped in sub-optimal local solutions \cite{WangModeling}. Therefore, to effectively tackle this difficult optimization and develop a practical, low-complexity algorithm for real-time deployment, we turn to the FP framework in the next section.
\vspace{-0.2cm}
\section{Algorithm Design}
\vspace{-0.1cm}
To address the highly non-convex structure of P1, we employ an FP approach \cite{ShenFrac}, which is a powerful technique for optimizing functions that are ratios of other functions, such as $\rho_k$ in \eqref{SINR}.
A major advantage of FP is its ability to transform a difficult ratio-based optimization problem into an equivalent but more tractable subtractive-form problem.
This transformation, as detailed in the following content, allows P1 to be equivalently transformed into a sequence of simpler problems that can be solved iteratively to achieve a locally optimal solution.

For ease of illustration, we let 
\begin{align}
A_k(\mathbf{x}^{\mathrm{pin}}) = {P\left|\textstyle\sum\nolimits_{m\in \tilde{\mathcal{M}}_{n_k}} h_{n_k,m,k}g_{n_k,m,k}\right|^2}    
\end{align}
and
\begin{align}
B_k(\mathbf{x}^{\mathrm{pin}})=    {P\textstyle\sum\nolimits_{i\in \mathcal{K}\backslash\{k\}}\left|\sum\nolimits_{m\in \tilde{\mathcal{M}}_{n_i}} h_{n_i,m,k}g_{n_i,m,k}\right|^2+\sigma^2}
\end{align}
where $\mathbf{x}^{\mathrm{pin}}$ is the collection of the $x$-coordinates of all the $K$ active pinching antennas.
Then we have
\begin{align}\label{Rk2}
R_k =\textstyle  \log_2\left(1+\frac{A_k(\mathbf{x}^{\mathrm{pin}})}{B_k(\mathbf{x}^{\mathrm{pin}})}\right).
\end{align}
From \eqref{Rk2}, we can clearly see that P1 is a sum-of-logarithms problem with a fractional parameter inside each logarithm function.
Given that $A_k(\mathbf{x}^{\mathrm{pin}})$ and $B_k(\mathbf{x}^{\mathrm{pin}})$ are both positive real function of $\mathbf{x}^{\mathrm{pin}}$, according to \cite[Theorem 3]{ShenFrac}, P1 can be equivalently reformulated as
\begin{align}
    \text{P2: }&\mathop {\max} \limits_{\mathbf{x}^{\mathrm{pin}},\bm{\gamma}}\,\,f(\mathbf{x}^{\mathrm{pin}},\bm{\gamma})\\
    \text{s.t.}\nonumber~
	&\text{C1\ and\ C2,}
\end{align}
where $\bm{\gamma}=(\gamma_1,\cdots,\gamma_K)$ is the collection of $K$ auxiliary variables, and the objective function $f(\mathbf{x}^{\mathrm{pin}},\bm{\gamma})$ is given by
\begin{align}\label{P4obj}
f(\mathbf{x}^{\mathrm{pin}},\bm{\gamma})=&\textstyle\sum_{k=1}^K \big{(}\ln{(1+\gamma_k)}-\gamma_k+\frac{(1+\gamma_k)A_k(\mathbf{x}^{\mathrm{pin}})}{A_k(\mathbf{x}^{\mathrm{pin}})+B_k(\mathbf{x}^{\mathrm{pin}})}\big{)}.
\end{align}
P2 can be solved by an iterative manner.
In each iteration, either $\mathbf{x}^{\mathrm{pin}}$ or $\bm{\gamma}$ is fixed and the other variable is optimized.
When $\mathbf{x}^{\mathrm{pin}}$ is fixed, $f(\mathbf{x}^{\mathrm{pin}},\bm{\gamma})$ is a concave function of $\bm{\gamma}$.
Therefore, by using the first-order optimality condition, i.e., $\frac{\partial f(\mathbf{x}^{\mathrm{pin}},\bm{\gamma})}{\partial \gamma_k}=0$, the optimal $\gamma_k$ can be obtained as
\begin{align}
\gamma_k^* =\textstyle \frac{A_k(\mathbf{x}^{\mathrm{pin}})}{B_k(\mathbf{x}^{\mathrm{pin}})},\ k=1,\cdots, K.
\end{align}
However, for a fixed $\bm{\gamma}$, solving P2 is still not an easy task due to the fractional form of $\frac{A_k(\mathbf{x}^{\mathrm{pin}})}{B_k(\mathbf{x}^{\mathrm{pin}})}$.
To further simplify the problem, we apply the Corollary 1 in \cite{ShenFrac1} to equivalently transform P2 to
\begin{align}
    \text{P3: }&\mathop {\max} \limits_{\mathbf{x}^{\mathrm{pin}},\bm{\gamma},\bm{\zeta}}\,\,F(\mathbf{x}^{\mathrm{pin}},\bm{\gamma},\bm{\zeta})\\
    \text{s.t.}\nonumber~
	&\text{C1\ and\ C2,}
\end{align}
where $\bm{\zeta}=(\zeta_1,\cdots,\zeta_K)$ is another collection of $K$ auxiliary variables, and the objective function of P3 is given by
\begin{align}\label{g_function}
&F(\mathbf{x}^{\mathrm{pin}},\bm{\gamma},\bm{\zeta})=\textstyle \sum_{k=1}^K\Big{(} \ln{(1+\gamma_k)}-\gamma_k \nonumber\\
    &{+}\textstyle 2\zeta_k\sqrt{(1+\gamma_k)A_k(\mathbf{x}^{\mathrm{pin}})}{-}\zeta_k^2\big{(}A_k(\mathbf{x}^{\mathrm{pin}}){+}B_k(\mathbf{x}^{\mathrm{pin}})\big{)}\Big{)}.  
\end{align}
Given that $F(\mathbf{x}^{\mathrm{pin}},\bm{\gamma},\bm{\zeta})$ is also a concave function of $\zeta_k,\ k=1,\cdots,K$, if $\mathbf{x}^{\mathrm{pin}}$ and $\bm{\gamma}$ are fixed, the optimal $\zeta_k,\ k=1,\cdots,K$, can be derived as
\begin{align}
\zeta_k^* =\textstyle \frac{\sqrt{(1+\gamma_k)A_k(\mathbf{x}^{\mathrm{pin}})}}{A_k(\mathbf{x}^{\mathrm{pin}})+B_k(\mathbf{x}^{\mathrm{pin}})},\ k=1,\cdots,K.    
\end{align}

The problem now becomes how to obtain the optimal $\mathbf{x}^{\mathrm{pin}}$ with given $\bm{\gamma}$ and $\bm{\zeta}$.
As P3 is a non-convex problem with respect to $\mathbf{x}^{\mathrm{pin}}$, it is difficult to derive its global optimal solution.
Considering the real-time deployment requirement of pinching antennas, we propose to search the optimal $\mathbf{x}^{\mathrm{pin}}$ via a low-complexity projected gradient ascent (PGA) method.
The main step of the PGA method is
\begin{align}\label{PGA}
 \mathbf{x}^{\mathrm{pin}}(\tau+1)  = \Phi_{\mathcal{X}}\big{(}\mathbf{x}^{\mathrm{pin}}(\tau)+\bm{\omega}_{\mathbf{x}^{\mathrm{pin}}}(\tau)\cdot\mu(\tau,t)\big{)}, 
\end{align}
where $\Phi_{\mathcal{X}}(\mathbf{x}^{\mathrm{pin}})$ is the projection function that projects $\mathbf{x}^{\mathrm{pin}}$ onto its feasible set $\mathcal{X}=\{\mathbf{x}^{\mathrm{pin}}\in[0,x_{\max}]^K\big{|}|x_{n,j}^{\mathrm{pin}}-x_{n,k}^{\mathrm{pin}}|\ge d_{\min},\forall j, k\in \tilde{\mathcal{M}}_n,n\in{\mathcal{N}}\}$.
$\mu(\tau,t)$ is the step size at iteration $\tau$ of the inner loop and iteration $t$ of the outer loop.
$\bm{\omega}_{\mathbf{x}^{\mathrm{pin}}}(\tau)=\frac{\partial F(\mathbf{x}^{\mathrm{pin}}(\tau),\bm{\gamma},\bm{\zeta})}{\partial \mathbf{x}^{\mathrm{pin}}(\tau)}$ is the gradient of $F(\mathbf{x}^{\mathrm{pin}}(\tau),\bm{\gamma},\bm{\zeta})$ with respect to $\mathbf{x}^{\mathrm{pin}}(\tau)$, with its detailed derivation presented in Appendix A.
As the feasible set $\mathcal{X}$ is non-convex because of the minimum distance constraint C2, we devise the projection algorithm in Algorithm \ref{Projection} to ensure that the updated $\mathbf{x}^{\mathrm{pin}}$ is feasible and closest to the non-projected solution.

Since the $\mathbf{x}^{\mathrm{pin}}$ coordinates for different waveguides $n\in {\cal N}$ are independent of each other, the projection $\Phi_{\mathcal{X}}$ can be applied separately to the set of antennas within each waveguide.
Crucially, for a given waveguide, the projection operation is equivalent to finding the closest ordered sequence of antenna locations that maintains a minimum separation $d_{\min}$.
This specific structure allows us to design an efficient pairwise-based projection algorithm that moves the antennas the minimum required distance to satisfy the constraints.

Algorithm \ref{Projection} is designed based on the following lemma and theorem.
\begin{myle}
Given the sorting function $\beta:\{1,\cdots,|\tilde{\mathcal{M}}_n| \}\rightarrow \{1,\cdots,|\tilde{\mathcal{M}}_n| \}$ that maps the original indices of $\{x^{\mathrm{pin}}_{m,n}\}_{m\in \tilde{\mathcal{M}}_n}$ to their new indices in ascending order, i.e., $x^{\mathrm{pin}}_{\beta(1),n}\le x^{\mathrm{pin}}_{\beta(2),n}\le\cdots\le x^{\mathrm{pin}}_{\beta(|\tilde{\mathcal{M}}_n|),n}$, the minimum distance constraint C2 is equivalent to requiring that the distance between any two antennas $m$ and $j$ must satisfy:
\begin{align}
    |x^{\mathrm{pin}}_{\beta(j),n}-x^{\mathrm{pin}}_{\beta(m),n}|\ge |j-m|d_{\min}.
\end{align}
This means that to satisfy all pairwise separation constraints, the total span occupied by $k$ consecutive antennas must be at least $(k-1)d_{\min}$.
\end{myle}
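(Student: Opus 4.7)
The plan is to prove both directions of the equivalence using the total ordering induced by $\beta$ together with a telescoping sum. Write $y_i := x^{\mathrm{pin}}_{\beta(i),n}$ so that by construction $y_1\le y_2\le\cdots\le y_{|\tilde{\cM}_n|}$, and the sorted differences $y_{i+1}-y_i$ are all non-negative.

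For the forward direction ($\Rightarrow$), I would first show that the pairwise constraint C2 specialised to \emph{consecutive} sorted indices gives $y_{i+1}-y_i \ge d_{\min}$ for every $i$, since C2 does not care about the original labelling. Then for arbitrary $m<j$ in the sorted ordering, I would telescope:
\begin{equation*}
y_j - y_m \;=\; \sum_{i=m}^{j-1} (y_{i+1}-y_i) \;\ge\; (j-m)\,d_{\min},
\end{equation*}
and take absolute values to cover the case $j<m$. This yields exactly the stated pairwise bound $|y_j-y_m|\ge |j-m|d_{\min}$.

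For the reverse direction ($\Leftarrow$), I would note that every pair of distinct original indices $(m,m')$ corresponds, via $\beta^{-1}$, to a pair of distinct sorted indices $(i,j)$ with $i\ne j$, so $|i-j|\ge 1$. The hypothesised inequality then gives $|x^{\mathrm{pin}}_{n,m}-x^{\mathrm{pin}}_{n,m'}| = |y_i - y_j| \ge |i-j|\,d_{\min} \ge d_{\min}$, which is precisely C2.

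There is no real obstacle here: the only subtlety to flag is that $\beta$ is a bijection, so restricting to consecutive sorted indices loses no generality, and that the sorted differences are non-negative, which removes the absolute-value signs during telescoping. The geometric corollary stated after the displayed inequality, that any $k$ consecutive antennas span at least $(k-1)d_{\min}$, then follows as the special case $j=m+k-1$ of the forward direction and can be recorded as an immediate remark that motivates the projection routine in Algorithm~\ref{Projection}.
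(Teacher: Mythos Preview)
Your argument is correct: the telescoping sum over consecutive sorted gaps is exactly the natural way to establish the forward direction, and the reverse direction is immediate from $|i-j|\ge 1$. The paper itself does not give a proof of this lemma at all---it simply states that the proof is straightforward and omits it---so there is nothing to compare against; your write-up supplies the missing routine details in the expected manner.
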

\noindent The proof is straightforward and hence omitted here.
Building on Lemma 1, the following theorem demonstrates the effectiveness of Algorithm \ref{Projection}.
\begin{myth}
The output of Algorithm \ref{Projection} is a feasible solution to P1 and satisfies
\begin{align}\label{OptProj}
\mathbf{x}^{\mathrm{proj}} = \underset{\mathbf{x}}{\arg\min}\{\|\mathbf{x}-\mathbf{x}^{\mathrm{pin}}\|\big{|}\mathbf{x}\in{\mathcal{X}}\}.    
\end{align}
\end{myth}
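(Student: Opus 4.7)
My plan is to prove Theorem 1 in three stages: a decomposition across waveguides, a reduction of each per-waveguide projection to a box-constrained isotonic regression via Lemma 1, and a matching of Algorithm \ref{Projection} to the classical pool-adjacent-violators (PAV) procedure.

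First I would observe that the feasible set $\mathcal{X}$ and the squared distance $\|\mathbf{x}-\mathbf{x}^{\mathrm{pin}}\|^{2}$ both split across waveguides, since C1 and C2 only couple antennas sharing the same index $n\in\mathcal{N}$. Hence the global projection factorizes into $N$ independent per-waveguide projections, and it suffices to prove optimality for a single waveguide $n$ with feasible set $\mathcal{X}_n:=\{\mathbf{z}\in[0,x_{\max}]^{M}:|z_j-z_m|\ge d_{\min}\ \forall j\ne m\}$. Next I would establish a rearrangement lemma: every minimizer must be sorted in the same order as the input, because $\mathcal{X}_n$ is permutation-invariant and swapping two mis-ordered entries of a candidate projection keeps it feasible while weakly decreasing its distance to $\mathbf{x}^{\mathrm{pin}}$, by the classical two-element rearrangement inequality. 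Applying Lemma 1 to the sorted coordinates, the pairwise constraints collapse to the neighbour-only condition $x^{\mathrm{pin}}_{\beta(j+1),n}-x^{\mathrm{pin}}_{\beta(j),n}\ge d_{\min}$, and the change of variables $y_j:=x^{\mathrm{pin}}_{\beta(j),n}-(j-1)d_{\min}$, an isometric translation applied coordinate-by-coordinate, transforms the per-waveguide problem into projecting the shifted input onto the monotone cone $y_1\le\cdots\le y_M$ intersected with a shifted box.

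With this reduction in hand, I would argue that Algorithm \ref{Projection}, read in the shifted coordinates, is exactly PAV followed by clipping: each pairwise ``push apart to gap $d_{\min}$'' is precisely the Euclidean projection of two violating entries onto the diagonal, and each pooled block is then clipped into the admissible interval $[0,x_{\max}]$, which remains non-empty because of the standing assumption $x_{\max}\ge (K-1)d_{\min}$. Invoking the well-known optimality of PAV for box-constrained isotonic $\ell_2$ regression, together with the uniqueness of the projection onto the (convex) shifted feasible set, then yields both the feasibility of the algorithm's output and the minimum-distance identity \eqref{OptProj}.

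\textbf{Main obstacle.} The step I expect to be hardest is verifying that Algorithm \ref{Projection}'s local pairwise pushes faithfully reproduce PAV's global pooling of violating blocks once the box clipping is taken into account. I would handle this by induction on the sweep order: whenever two consecutive violating blocks merge, the resulting averaged-and-shifted center must remain compatible with its updated neighbours, which amounts to showing that clipping a pooled block into $[0,x_{\max}]$ cannot re-open a violation against an already-processed block. Handling these boundary interactions cleanly, rather than the core PAV argument itself, is where the care is required.
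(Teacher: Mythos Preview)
Your decomposition across waveguides and the order-preserving rearrangement lemma are both sound, and the change of variables $y_j:=x^{\mathrm{pin}}_{\beta(j),n}-(j-1)d_{\min}$ is a clean reduction of the per-waveguide projection to box-constrained isotonic $\ell_2$ regression. The gap is in the final identification: Algorithm~\ref{Projection} is \emph{not} PAV, and the pairwise pushes it performs are not the Euclidean pairwise merges that PAV uses. Inspect Steps~8--15: when the pair $(\beta(m),\beta(j))$ violates the span requirement by $\Delta x$, the algorithm subtracts the \emph{entire} deficit from the lower antenna (Step~14), and only pushes the upper antenna up if the lower one hits $x_{\mathrm{low}}$ (Steps~10--12). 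PAV's $\ell_2$ merge, by contrast, would move each endpoint by $\Delta x/2$ in opposite directions (in your shifted coordinates, set both to their common mean). These are different maps.

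A two-antenna instance shows this concretely. Take $|\tilde{\mathcal{M}}_n|=2$, $x^{\mathrm{pin}}=(5,\,5.5)$, $d_{\min}=1$, $x_{\max}=10$. Algorithm~\ref{Projection} returns $(4.5,\,5.5)$ with squared distance $0.25$, whereas the Euclidean projection onto $\{x_2-x_1\ge 1\}\cap[0,10]^2$ is $(4.75,\,5.75)$ with squared distance $0.125$. Hence the inductive matching you propose between the algorithm's sweep and PAV's pooling cannot succeed, because already at the base step the two procedures disagree. The paper's own proof handles this step only by the informal assertion that ``antennas are moved only the distance $\Delta x$ necessary,'' without verifying that a one-sided shift of size $\Delta x$ is the $\ell_2$-optimal way to close the gap; your reduction in fact exposes that it is not. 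So the obstacle you flagged---reconciling local pushes with global pooling under clipping---is not merely delicate; the mechanics of the algorithm do not match the solver your reduction calls for.
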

\begin{proof}
When constraints C1 and C2 are satisfied, we have
\begin{align}\label{Cond1}
0\le x^{\mathrm{pin}}_{\beta(1),n}\le x^{\mathrm{pin}}_{\beta(2),n}\le \cdots\le 
x^{\mathrm{pin}}_{\beta(|\tilde{\mathcal{M}}_n|),n}\le x_{\max},
\end{align}
\begin{align}\label{Cond2}
x^{\mathrm{pin}}_{\beta(m),n}+d_{\min}\le x^{\mathrm{pin}}_{\beta(m+1),n},\forall m=1,\cdots,|\tilde{\mathcal{M}}_n|-1,   
\end{align}
\begin{align}\label{Cond3}
\textstyle x^{\mathrm{pin}}_{\beta(|\tilde{\mathcal{M}}_n|+1-m),n}-x^{\mathrm{pin}}_{\beta(m),n}\ge (|\tilde{\mathcal{M}}_n|+1-2m)d_{\min},\nonumber\\
\textstyle\forall m=1,\cdots,\Big{\lceil}\frac{|\tilde{\mathcal{M}}_n|}{2}\Big{\rceil}.    
\end{align}
\eqref{Cond1} and \eqref{Cond2} are the necessary and sufficient conditions of constraints C1 and C2, respectively, while \eqref{Cond3} is the necessary condition of constraint C2.

\begin{algorithm}[t!]
\small
\caption{Projection of $\mathbf{x}^{\mathrm{pin}}$ onto $\mathcal X$}\label{Projection}
\begin{algorithmic}[1]
\Require $\mathbf{x}^{\mathrm{pin}}$.
\For {$n\in {\mathcal{N}}$}
\State Apply the sorting function $\beta:\{1,\cdots,|\tilde{\mathcal{M}}_n| \}\rightarrow \{1,\cdots,|\tilde{\mathcal{M}}_n| \}$ to the indices of the elements of $\mathbf{x}^{\mathrm{pin}}$.
\State $x_{\mathrm{low}} = 0$, $x_{\mathrm{up}} = x_{\max}$.
\For {$m=1,\cdots,\big{\lceil}\frac{|\tilde{\mathcal{M}}_n|}{2}\big{\rceil}$}
\State $j = |\tilde{\mathcal{M}}_n|+1-m$.
\State $x^{\mathrm{proj}}_{\beta(m),n}\leftarrow\max(x_{\mathrm{low}},x^{\mathrm{pin}}_{\beta(m),n})$.
\State $x^{\mathrm{proj}}_{\beta(j),n}\leftarrow\min(x_{\mathrm{up}},x^{\mathrm{pin}}_{\beta(j),n})$.
\If{$x^{\mathrm{proj}}_{\beta(j),n}-x^{\mathrm{proj}}_{\beta(m),n}<(j-m)d_{\min}$}
\State $\Delta x = (j-m)d_{\min}-(x^{\mathrm{proj}}_{\beta(j),n}-x^{\mathrm{proj}}_{\beta(m),n})$.
\If{$x^{\mathrm{proj}}_{\beta(m),n}-\Delta x< x_{\mathrm{low}}$}
\State $x^{\mathrm{proj}}_{\beta(j),n}{\leftarrow}x^{\mathrm{proj}}_{\beta(j),n}{+}(\Delta x{-}(x^{\mathrm{proj}}_{\beta(m),n}{-}x_{\mathrm{low}}))$.
\State $x^{\mathrm{proj}}_{\beta(m),n}\leftarrow x_{\mathrm{low}}$.
\Else
\State $x^{\mathrm{proj}}_{\beta(m),n}\leftarrow x^{\mathrm{proj}}_{\beta(m),n}-\Delta x$.
\EndIf
\EndIf
\State $x_{\mathrm{low}}\leftarrow x^{\mathrm{proj}}_{\beta(m),n}+d_{\min}$.
\State $x_{\mathrm{up}}\leftarrow x^{\mathrm{proj}}_{\beta(j),n}-d_{\min}$.
\EndFor
\State Apply the inverse permutation of the initial sorting $\beta^{-1}:\{1,\cdots,|\tilde{\mathcal{M}}_n| \}\rightarrow \{1,\cdots,|\tilde{\mathcal{M}}_n| \}$ to the indices of $\{x^{\mathrm{proj}}_{\beta(m),n}\}_{m\in \tilde{\mathcal{M}}_n}$.
\EndFor
\Ensure The feasible solution $\mathbf{x}^{\mathrm{proj}}$.
\end{algorithmic}
\end{algorithm}

Let us first prove the feasibility of the output of Algorithm 1.
The algorithm's iterative approach processes the antennas in pairs, starting from the outermost pair $(x^{\mathrm{pin}}_{\beta(1),n},x^{\mathrm{pin}}_{\beta(|\tilde{\mathcal{M}}_n|),n})$ and move inwards.
At each stage $m$, the antennas are first projected onto the boundary limits $x_{\mathrm{low}}$ and $x_{\mathrm{up}}$.
By setting their initial values as $0$ and $x_{\max}$, respectively, as shown in Step 3, the constraint C1 is satisfied.
Steps 8 and 9 show that if the total span of the current pair violates \eqref{Cond3}, the algorithm adjusts the positions by $\Delta x$.
This adjustment ensures that the minimum distance is satisfied, first by shifting the lower antenna (Step 14) or, if this violates the lower bound, by shifting both the lower and upper antennas to respect the boundary (Steps 11 and 12).
This process is executed for all pairs, and the update of $x_{\mathrm{low}}$ and $x_{\mathrm{up}}$ (Steps 17 and 18) ensures that subsequent inner pair satisfies \eqref{Cond2} with respect to the already fixed outer pair.
Consequently, all three conditions of the feasible set, \eqref{Cond1}-\eqref{Cond3}, are satisfied by the final output, proving the feasibility.

Specifically, whenever a constraint is violated, the antennas are moved only the distance $\Delta x$ necessary to satisfy the lower bound $(j-m)d_{\min}$.
This minimal movement strategy is equivalent to finding the $\mathbf{x}^{\mathrm{proj}}$ that minimizes the Euclidean distance $\|\mathbf{x}^{\mathrm{proj}}-\mathbf{x}^{\mathrm{pin}}\|$ subject to the constraints, which proves \eqref{OptProj}.
\end{proof}

With Algorithm 1, we summarize our FP-based iterative method for solving P3 in Algorithm \ref{SolPro5}, which converges to a stationary point of the original problem P1.
\begin{algorithm}[t!]
\small
\caption{FP-Based Iterative Approach for P3}\label{SolPro5}
\begin{algorithmic}[1]
\Require A feasible solution of P3 $\mathbf{x}^{\mathrm{pin}}(0)$.
\State $t=0$
\Repeat
    \State $\gamma_k(t)=\frac{A_k(\mathbf{x}^{\mathrm{pin}}(t))}{B_k(\mathbf{x}^{\mathrm{pin}}(t))},\ k=1,\cdots, K.$ 
    \State $\zeta_k(t)=\frac{\sqrt{(1+\gamma_k(t))A_k(\mathbf{x}^{\mathrm{pin}}(t))}}{A_k(\mathbf{x}^{\mathrm{pin}}(t))+B_k(\mathbf{x}^{\mathrm{pin}}(t))},\ k=1,\cdots,K.$
    \State $\tau=0$, $\mathbf{x}^{\mathrm{pin}}(\tau)=\mathbf{x}^{\mathrm{pin}}(t)$.
    \Repeat
        \State Update $\mathbf{x}^{\mathrm{pin}}(\tau+1)$ using \eqref{PGA} and Algorithm \ref{Projection}.
        \State $\tau\leftarrow \tau+1$.
    \Until Convergence.
    \State $\mathbf{x}^{\mathrm{pin}}(t+1)=\mathbf{x}^{\mathrm{pin}}(\tau+1)$.
    \State $t\leftarrow t+1$.
\Until Convergence.
\Ensure The solution $\mathbf{x}^{\mathrm{pin}}(t+1)$, $\bm{\gamma}(t)$ and $\bm{\zeta}(t)$.
\end{algorithmic}
\end{algorithm}
\section{Simulation Results}
In this section, extensive simulations are conducted to verify the performance of the proposed algorithms.
Unless otherwise stated, the values of different parameters are summarized in Table I.
All users are randomly distributed in a rectangular region $[0,x_{\max}]\times[0,y_{\max}]$.
The average per-user data rate is obtained using the Monte-Carlo method and averaged over 1000 samples of users' positions.
For Algorithm 2, the expression of the step size $\mu(\tau,t)$ in \eqref{PGA} is set as 
\begin{align}
\mu(\tau,t) =\textstyle \frac{0.01}{(\tau+\tau_{\max}(t-1))^{0.6}},
\end{align}
where $\tau_{\max}$ is the maximum number of iterations in the inner loop of Algorithm 2.
All experiments were conducted on a system equipped with an Intel i7-13700 CPU, utilizing MATLAB R2025a for implementation.

For illustration, we consider the following three benchmarks for comparison:
\begin{itemize}
    \item \textbf{Closest-to-User Placement (CUP):} The active antenna coordinates $\mathbf{x}^{\mathrm{pin}}$ are determined by projecting the user $x$-coordinates onto the feasible set $\cal X$, ensuring the final placement is the closest possible feasible solution to the users' physical locations.
    \item \textbf{Uniform Pre-Placement with Closest Selection (UPCS):} Each waveguide is populated with a fixed grid of antennas with spacing interval $0.1$. The $x$-coordinates of the final active antenna set $\mathbf{x}^{\mathrm{pin}}$ are formed by selecting the single pre-placed antenna that is closest to each serving user.
    \item \textbf{Random Pre-Placement with Closest Selection (RPCS):} Each waveguide has $M=K$ antennas randomly placed according to a uniform distribution, while still satisfying constraint C2. The $x$-coordinates of the final active antenna set $\mathbf{x}^{\mathrm{pin}}$ are formed by selecting the single pre-placed antenna that is closest to each serving user.
\end{itemize}

\begin{table}
    \centering
    \caption{Simulation Setting}
    \label{SimSet}
  \scalebox{0.8}{  \begin{tabular}{|c|c|}
    \hline
    \textbf{Parameter} & \textbf{Value}\\
    \hline
    Carrier frequency     & $28$ GHz \\ 
     \hline
      \makecell[c]{Simulation Region Width\\ (waveguide length) $x_{\max}$}     &$10$ m \\ 
     \hline
     Simulation Region length $y_{\max}$     &$10$ m \\ 
     \hline
    Height of waveguides $h$     &$3$ m \\ 
     \hline
    Number of waveguides $N$     &$[5,10]$ \\ 
     \hline
    Number of users $K$     &$[50,100]$ \\ 
     \hline
    Effective refractive index $n_{\mathrm{eff}}$     & $1.4$ \\
     \hline
    Noise power  $\sigma^2$    & $-90$ dBm\\ 
     \hline
    Transmit power for each user     & $1$ Watt\\
     \hline
    \makecell[c]{The minimum distance between two \\ pinching antennas in one waveguide $d_{\min}$}     & $\{\lambda/2,0.1.0.2\}$\\
    \hline
    \makecell[c]{The maximum number of iterations in\\ the inner loop of Algorithm 1 $\tau_{\max}$}  & $100$ \\ 
     \hline
    \end{tabular}   }
\end{table}
\begin{figure}[t!]
\centering
\includegraphics[width=2.4in]{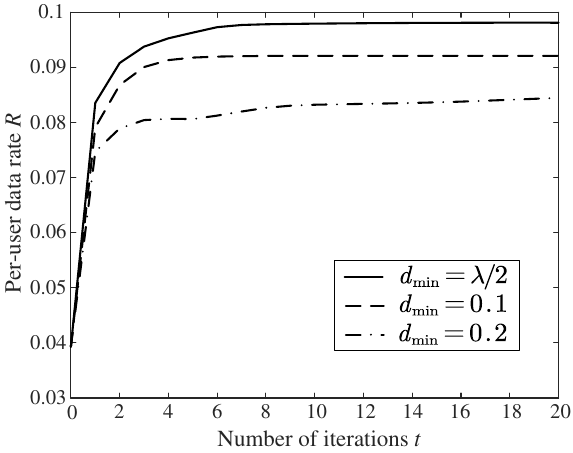}
\footnotesize
\captionsetup{font=footnotesize}
\caption{Per-user data rate versus the number of outer iterations $t$ of Algorithm 2, wherein $N=6$, $K=50$.}\label{conv_alg1}
\end{figure}

Fig. \ref{conv_alg1} shows the convergence performance of Algorithm 2 by plotting the per-user data rate $R$ against the number of outer iterations $t$. 
The algorithm exhibits fast convergence, reaching a stable solution within approximately $5$ to $10$ outer iterations across all tested minimum distance constraints $d_{\min}$.
Specifically, the algorithm converges quickest when the constraint is least restrictive, i.e., $d_{\min} = \lambda/2\approx0.01$, and achieves the highest data rate.
We also observe that as $d_{\min}$ increases from $\lambda/2$ to $0.1$ and $0.2$, the final achieved per-user data rate decreases, and the convergence speed becomes slower. 
This is because a larger $d_{\min}$ imposes a more restrictive minimum distance requirement between antennas, making the feasible set $\cal X$ for the antenna positions more constrained and highly non-convex. 
It is also worth noting that, although in the inner loop of Algorithm 1, we set the maximum number of iterations $\tau_{\max}=100$, the operation time in our simulation environment is about $30$ ms for these $100$ iterations.
Therefore, the proposed algorithm is computationally efficient enough for real-time applications.
Based on Fig. \ref{conv_alg1}, we set the maximum number of outer iterations $t_{\max}=10$ in Figs. \ref{AveR_vs_K} and \ref{AveR_vs_N}.

Fig. \ref{AveR_vs_K}  illustrates the relationship between the average per-user data rate and the number of users $K$.
We can see that the proposed scheme consistently achieves a significantly higher average per-user data rate than the three baseline schemes across all tested numbers of users, confirming the superiority of the FP approach over heuristic strategies.
Specifically, when $K=50$ and $d_{\min}=0.1$, the performance gain achieves $113\%$, $113\%$, and $127\%$ over CUP, UPCS, and RPCS schemes, respectively.
For all the schemes, the average per-user data rate decreases as the number of users $K$ increases. 
This is because, as $K$ grows, the total interference in the system increases, leading to a drop in the SINR for each user. Simultaneously, more active antennas must be placed in a limited waveguide length, which severely constricts the feasible set $\cal X$ due to constraint C2, preventing the placement of antennas in optimal, high-rate positions. 
It is also interesting to see that the CUP scheme, where pinching antennas are placed at the same $x$-coordinate as their serving users, maintains a relatively low average per-user data rate. 
This observation demonstrates that simply placing antennas close to users is not an effective strategy, as desired signals may add destructively at each user.

\begin{figure}[t!]
\centering
\includegraphics[width=2.4in]{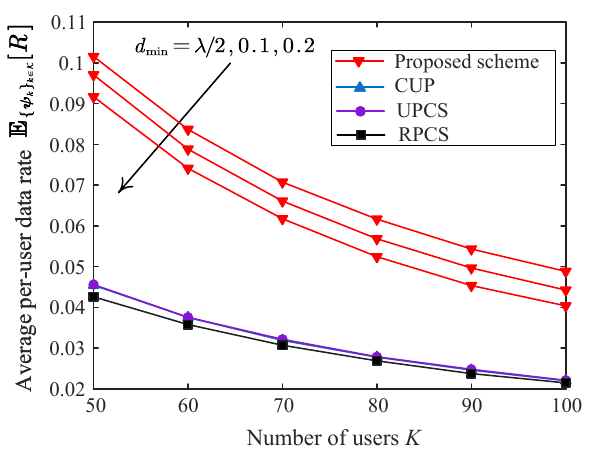}
\footnotesize
\captionsetup{font=footnotesize}
\caption{Average per-user data rate versus the number of users $K$, where  the number of waveguides is set $N=6$.}\label{AveR_vs_K}
\end{figure}

\begin{figure}[t!]
\centering
\includegraphics[width=2.4in]{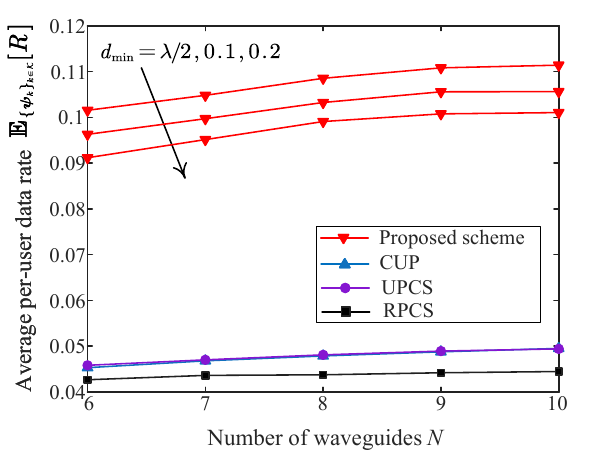}
\footnotesize
\captionsetup{font=footnotesize}
\caption{Average per-user data rate versus the number of waveguides $N$. $K=50$.}\label{AveR_vs_N}
\end{figure}

Fig. \ref{AveR_vs_N} demonstrates how the average per-user data rate varies with the number of waveguides $N$. 
The proposed scheme consistently and significantly outperforms the three baseline schemes across all simulated numbers of waveguides.
In contrast to the effect of $K$, the average per-user data rate increases as the number of waveguides $N$ increases for all schemes.
When $N$ increases and $K$ is constant, the number of users per waveguide decreases, which effectively localizes and reduces the total multi-user interference that each user experiences. This expanded spatial resource provides the proposed scheme with a greater opportunity to select optimal antenna positions that balance signal focusing and interference cancellation.
\section{Conclusion}
In this paper, we addressed the highly non-convex antenna placement optimization problem for a multi-waveguide pinching antenna system.
Our approach utilized the FP framework to transform the objective into a more tractable form, which was then solved with a specialized projection step to enforce feasibility.
Simulation results demonstrated that the proposed scheme significantly outperforms all geometric baselines.
We further showed that the data rate decreases with the number of users $K$ due to heightened interference and tighter constraints, but improves with the number of waveguides $N$ as interference is better mitigated and spatial resources are distributed.
For future work, it would be interesting to extend this framework to scenarios involving joint optimization of antenna placement and waveguide deployment.
\begin{appendices}
\section{Derivation of $\bm{\omega}_{\mathbf{x}^{\mathrm{pin}}}(\tau)$}
In the following, we omit the iteration index $\tau$ for the sake of conciseness.
Based on \eqref{g_function}, we have
\begin{align}\label{grd_start_x}
\bm{\omega}_{\mathbf{x}^{\mathrm{pin}}}= &\textstyle\sum_{k=1}^K \Big{(}\big{(}\zeta_k\sqrt{(1+\gamma_k)}A_k^{-\frac{1}{2}}(\mathbf{x}^{\mathrm{pin}})-\zeta_k^2\big{)}\nonumber\\
&\textstyle\cdot\frac{\partial A_k(\mathbf{x}^{\mathrm{pin}})}{\partial \mathbf{x}^{\mathrm{pin}}}-\zeta_k^2\frac{\partial B_k(\mathbf{x}^{\mathrm{pin}})}{\partial \mathbf{x}^{\mathrm{pin}}}\Big{)}   
\end{align}
Let 
\begin{align}\label{A_aux}
    A_{i,k} =\textstyle P\left|\sum\nolimits_{m\in \tilde{\mathcal{M}}_{n_i}} h_{n_i,m,k}g_{n_i,m,k}\right|^2.
\end{align}
Then we have
\begin{align}
\textstyle\frac{\partial A_k(\mathbf{x}^{\mathrm{pin}})}{\partial \mathbf{x}^{\mathrm{pin}}}= \frac{\partial A_{k,k}}{\partial \mathbf{x}^{\mathrm{pin}}},  
\end{align}
and
\begin{align}
\textstyle\frac{\partial B_k(\mathbf{x}^{\mathrm{pin}})}{\partial \mathbf{x}^{\mathrm{pin}}}= \sum_{i\in \mathcal{K}\backslash\{k\}}\frac{\partial A_{i,k}}{\partial \mathbf{x}^{\mathrm{pin}}}.  
\end{align}
Knowing that for a complex function $f(x)$ of $x$, the derivative 
\begin{align}\label{complex_grd}
\textstyle\frac{\mathrm{d}|f(x)|^2}{\mathrm{d} x}=2\Re\left(f^*(x)\cdot \frac{\mathrm{d}f(x)}{\mathrm{d} x}\right), 
\end{align}
the expression of $\frac{\partial A_{i,k}}{\partial \mathbf{x}^{\mathrm{pin}}}$ can be obtained as
\begin{align}\label{A_grd_x}
\textstyle\frac{\partial A_{i,k}}{\partial \mathbf{x}^{\mathrm{pin}}}=&\textstyle2P\Re{\Big(}{\big(}\sum\nolimits_{m\in \tilde{\mathcal{M}}_{n_i}} h^*_{n_i,m,k}g^*_{n_i,m,k} {\big)}\nonumber\\
&\textstyle\cdot \sum\nolimits_{m\in \tilde{\mathcal{M}}_{n_i}}g_{n_i,m,k}\frac{\partial h_{n_i,m,k}}{\partial \mathbf{x}^{\mathrm{pin}}}+ h_{n_i,m,k}\frac{\partial g_{n_i,m,k}}{\partial \mathbf{x}^{\mathrm{pin}}}{\Big)}.    
\end{align}
Based on \eqref{FS_channel}, $\frac{\partial h_{n_i,m,k}}{\partial \mathbf{x}^{\mathrm{pin}}}$ can be derived as
\begin{align}
    \textstyle\frac{\partial h_{n_i,m,k}}{\partial \mathbf{x}^{\mathrm{pin}}}=&\textstyle h_{n_i,m,k}\cdot{\Big(}{\big(}-j\frac{2\pi}{\lambda}-\frac{1}{\|\bm{\psi}_k-\bm{\psi}_{n_i,m}^{\mathrm{pin}}\|}{\big)}\frac{\partial \|\bm{\psi}_k-\bm{\psi}_{n_i,m}^{\mathrm{pin}}\|}{\partial \mathbf{x}^{\mathrm{pin}}}{\Big)}.
\end{align}
Given that 
\begin{align}\label{Dis_p_u}
\|\bm{\psi}_k-\bm{\psi}_{n_i,m}^{\mathrm{pin}}\|=\sqrt{(x_k-x_{n_i,m}^{\mathrm{pin}})^2+(y_k-y_{n_i}^{\mathrm{pin}})^2+d^2},    
\end{align}
$\frac{\partial \|\bm{\psi}_k-\bm{\psi}_{n_i,m}^{\mathrm{pin}}\|}{\partial \mathbf{x}^{\mathrm{pin}}}$ can be derived as
\begin{align}
\textstyle\frac{\partial \|\bm{\psi}_k-\bm{\psi}_{n_i,m}^{\mathrm{pin}}\|}{\partial \mathbf{x}^{\mathrm{pin}}}=\left[\frac{\partial \|\bm{\psi}_k-\bm{\psi}_{n_i,m}^{\mathrm{pin}}\|}{\partial x_{n_1,1}^{\mathrm{pin}}},\cdots,\frac{\partial \|\bm{\psi}_k-\bm{\psi}_{n_i,m}^{\mathrm{pin}}\|}{\partial x_{n_K,|\tilde{\mathcal{M}}_{n_K}|}^{\mathrm{pin}}}\right],    
\end{align}
where
\begin{align}
\textstyle\frac{\partial \|\bm{\psi}_k-\bm{\psi}_{n_i,m}^{\mathrm{pin}}\|}{\partial x_{n_k,j}^{\mathrm{pin}}}=\left\{
\begin{array}{ll}
\frac{x_{n_i,m}^{\mathrm{pin}}-x_k}{\|\bm{\psi}_k-\bm{\psi}_{n_i,m}^{\mathrm{pin}}\|} & {\mathrm{if}}\ n_i = n_k\ \mathrm{and}\ m=j\\
0 & {\mathrm{otherwise}}.
\end{array}
\right.    
\end{align}
Based on \eqref{WG_channel} and $\|\bm{\psi}_{n_i,m}^{\mathrm{pin}}-\bm{\psi}_{n_i,0}^{\mathrm{pin}}\|=x_{n_i,m}^{\mathrm{pin}}$, the term $\frac{\partial g_{n_i,m,k}}{\partial \mathbf{x}^{\mathrm{pin}}}$ in \eqref{A_grd_x} can be derived as
\begin{align}
    \textstyle\frac{\partial g_{n_i,m,k}}{\partial \mathbf{x}^{\mathrm{pin}}}=-j\frac{2\pi n_{\mathrm{eff}}g_{n_i,m,k}}{\lambda}\cdot \frac{\partial x_{n_i,m}^{\mathrm{pin}}}{\partial \mathbf{x}^{\mathrm{pin}}},
\end{align}
where
\begin{align}
\textstyle\frac{\partial x_{n_i,m}^{\mathrm{pin}}}{\partial \mathbf{x}^{\mathrm{pin}}}=    \left[\frac{\partial x_{n_i,m}^{\mathrm{pin}}}{\partial x_{n_1,1}^{\mathrm{pin}}},\cdots,\frac{\partial x_{n_i,m}^{\mathrm{pin}}}{\partial x_{n_K,|\tilde{\mathcal{M}}_{n_K}|}^{\mathrm{pin}}}\right]
\end{align}
and
\begin{align}\label{grd_end_x}
\textstyle\frac{\partial x_{n_i,m}^{\mathrm{pin}}}{\partial x_{n_k,j}^{\mathrm{pin}}}=\left\{
\begin{array}{ll}
1 & {\mathrm{if}}\ n_i = n_k\ \mathrm{and}\ m=j\\
0 & {\mathrm{otherwise}}.
\end{array}
\right.    
\end{align}
At last, $\bm{\omega}_{\mathbf{x}^{\mathrm{pin}}}(\tau)$ can be obtained by combining \eqref{grd_start_x}--\eqref{grd_end_x}.
\end{appendices}

\end{document}